\title{A weakly monotonic, logically constrained, HORPO-variant}
\author{Cynthia Kop}{Radboud University, Nijmegen, The Netherlands}{c.kop@cs.ru.nl}{https://orcid.org/0000-0002-6337-2544}{}
\authorrunning{L. Guo and C. Kop}
\keywords{Higher-order term rewriting, constrained rewriting, dependency pairs}
\newcommand{\Var}{\mathit{Var}}
\newcommand{\gtvA}{L}
\newcommand{\downarrowcalc}{\downarrow_\kappa}
\newcommand{\arrz}{\rightarrow}
\newcommand{\arr}[1]{\arrz_{#1}}
\newcommand{\arrcalc}{\arr{\kappa}}
\newcommand{\arrtype}{\Rightarrow}
\newcommand{\grmain}{\sqsupset}
\newcommand{\geqmain}{\sqsupseteq}
\newcommand{\ismain}{\approx}
\newcommand{\rpomain}{\sqsupset\!\!\sqsupset}
\newcommand{\geqth}{\succeq_\varphi^\gtvA}
\newcommand{\grth}{\succ_\varphi^\gtvA}
\newcommand{\rpoth}{\succ\!\!\!\succ_\varphi^\gtvA}
\newcommand{\rpo}{\succ\!\!\!\succ}
\newcommand{\grpred}{\rhd}
\newcommand{\geqpred}{\unrhd}
\newcommand{\leqpred}{\unlhd}
\newcommand{\eqpred}{\equiv}
\newcommand{\grsort}{\blacktriangleright}
\newcommand{\geqsort}{\unrhd\!\!\!\!\!\!\blacktriangleright}
\newcommand{\symb}[1]{\mathtt{#1}}
\newcommand{\afun}{\symb{f}}
\newcommand{\bfun}{\symb{g}}
\newcommand{\cfun}{\symb{h}}
\newcommand{\atype}{\sigma}
\newcommand{\btype}{\tau}
\newcommand{\asort}{\iota}
\newcommand{\bsort}{\iota'}
\newcommand{\avar}{x}
\newcommand{\bvar}{y}
\newcommand{\cvar}{z}
\newcommand{\filter}{\pi}
\begin{document}

\maketitle

\begin{abstract}
In this short paper, we present a simple variant of the recursive path ordering, specified for
Logically Constrained Simply Typed Rewriting Systems (LCSTRSs).  This is a method for
\emph{curried} systems, without $\lambda$ but with partially applied function symbols, which can
deal with \emph{logical constraints}.  As it is designed for use in the dependency pair framework,
it is defined as reduction pair, allowing weak monotonicity.
\end{abstract}

\section{Introduction}

This is a technical report, so no preliminaries are given.
We assume familiarity with Logically Constrained Simply Typed Rewriting
Systems~\cite{guo:kop:24}, and will adapt the HORPO variant given in that paper to a reduction
pair (although for now, we omit the multiset status as it would complicate the definition).

The main difference with~\cite{guo:kop:24} is a kind of filtering to take advantage of the weaker
monotonicity requirements that are used for reduction pairs in the DP framework.  This filtering is
reminiscent of \emph{argument filterings} which are often used in dependency pair approaches, but
note that the direct definition of argument filtering does not naturally extend to systems with
partially applied function symbols -- so a special definition is needed.

We will use the following definigion of a constrained reduction pair:

\begin{definition}\label{def:redpair}
A \emph{constrained relation} is a set R of tuples $(s,t,\varphi,\gtvA)$. denoted $s\ 
R_\varphi^\gtvA\ t$, where $s$ and $t$ are terms of the same type, $\varphi$ is a constraint, and
$\gtvA$ is a set of variables.

We say a binary relation $R'$ on terms \emph{covers} $R$ if $s\ R_\varphi^\gtvA\ t$ implies that
$(s\gamma)\downarrowcalc\ R'\ (t\gamma\downarrowcalc)$ for any substitution $\gamma$ that respects
$\varphi$ and maps all $x \in \gtvA$ to ground theory terms.

A \emph{constrained reduction pair} is a pair $(\succeq,\succ)$ of constrained relations such that
there exist
a \emph{reflexive} relation $\geqmain$ that covers $\succeq$ and
a \emph{well-founded} relation $\grmain$ that covers $\succ$
such that:
\begin{itemize}
\item $\arrcalc\, \subseteq\, \geqmain$, and
\item $\geqmain \cdot \grmain\, \subseteq\, \grmain^+$ or
  $\grmain \cdot \geqmain\, \subseteq\, \grmain^+$, and
\item $\geqmain$ is monotonic: $s \geqmain t$ implies $C[s] \sqsupseteq C[t]$
  for every appropriately-typed context $C$.
\end{itemize}
If also $\grmain$ is monotonic, we say $(\succeq,\succ)$ is a \emph{strongly monotonic
constrained reduction pair}.
\end{definition}

As most existing HORPO variants actually consider systems where function symbols are maximally
applied, we will start by defining a variation for (unconstrained) STRS.
Then, in Section~\ref{sec:constrained}, we will use this variant as $(\geqmain,\grmain)$ in
Definition~\ref{def:redpair}.

Notationally, we will use the following conventions without expicitly stating them:
\begin{itemize}
\item $\asort$ always refers to a sort (base type); $\atype,\btype$ to arbitrary types
\item $\afun,\bfun,\cfun$ always refers to a function symbol, $\avar,\bvar,\cvar$ to a variable
\end{itemize}

\section{Unconstrained HORPO for curried systems}\label{sec:unconstrained}

We assume familiarity with STRSs: in short, terms are constructed from a set of typed variables,
typed constants, and the type-conscious application operator (these are LCSTRSs with an empty
theory signature).
We allow for an infinite set of function symbols (constants), but assume given:
\begin{itemize}
\item a \emph{precedence}: a quasi-ordering $\geqpred$ on the set of all function symbols, such
  that its strict part $\grpred ::= \geqpred \setminus \leqpred$, is well-founded;
  we write $\eqpred$ for the equivalence relation $\geqpred \cap \leqpred$  
\item a \emph{filter}: a function $\filter$ mapping each function symbol $\afun :: \atype_1
  \arrtype \dots \arrtype \atype_m \arrtype \asort$ with $\asort$ a sort to a subset of
  $\{1,\dots,m\}$: the arguments that $\afun$ regards
\end{itemize}
Moreover, while we allow an infinite number of symbols with the same status, we require that their
maximal arity (that is, the number $m$ if $\afun :: \atype_1 \arrtype \dots \arrtype \atype_m
\arrtype \asort$) is bounded.

\medskip
In all the following, we collapse base types; that is, we assume that there is only one sort
$\asort$.  Essentially, this means that we consider (and will do induction on) type \emph{structure}
rather than directly on types.
We will define a number of relations.

\subsection{Equivalence}
The first of our relations can be defined without reference to the others.
For two terms $s,t$, we define $s \ismain t$ if $s$ and $t$ have the same type structure, and one of
the following holds:
\begin{description}
\item[(Eq-mono)] $s = \avar\ s_1 \cdots s_n$ and $t = \avar\ t_1 \cdots t_n$ for $\avar$ a variable,
  and each $s_i \ismain t_i$
\item[(Eq-args)] $s = \afun\ s_1 \cdots s_n$ and $t = \bfun\ t_1 \cdots t_n$ for $\afun,\bfun$
  function symbols with the same type structure such that
  $\afun \eqpred \bfun$,
  $\filter(\afun) = \filter(\bfun)$, and 
  $s_i \ismain t_i$ for all $i \in \filter(\afun) \cap \{1,\dots,n\}$.
\end{description}

We observe that $\ismain$ is a monotonic and stable equivalence relation.

\begin{lemma}\label{lem:ismain}
The relation $\ismain$ is transitive, reflexive, symmetric, monotonic, and stable.
\end{lemma}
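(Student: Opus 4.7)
The plan is to prove all five properties in turn by induction on term structure, with reflexivity established first so that it can be invoked in the later cases.

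For \emph{reflexivity}, I would write $s = h\ s_1 \cdots s_n$ with $h$ either a variable or a function symbol, and apply Eq-mono in the first case and Eq-args in the second (using that $\eqpred$ is reflexive, that $\filter(h) = \filter(h)$, and the induction hypothesis on the $s_i$). \emph{Symmetry} is immediate by induction on the derivation, since the conditions of Eq-mono are symmetric in $s$ and $t$, while Eq-args uses symmetry of $\eqpred$. For \emph{transitivity}, given $s \ismain t$ and $t \ismain u$, I would induct on $s$: the head of $t$ forces both $s \ismain t$ and $t \ismain u$ to be witnessed by the same rule, and the conclusion follows from transitivity of $\eqpred$, equality of the three filters, and the IH on corresponding arguments.

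For \emph{monotonicity}, it suffices to show that $s \ismain t$ implies $u\ s \ismain u\ t$ and $s\ v \ismain t\ v$ for every appropriately typed $u,v$; the general claim $C[s] \ismain C[t]$ then follows by structural induction on the context. For $u\ s \ismain u\ t$ I would write $u = h\ u_1 \cdots u_n$ and apply Eq-mono or Eq-args to $h\ u_1 \cdots u_n\ s$ versus $h\ u_1 \cdots u_n\ t$, using reflexivity on each $u_i$ and $s \ismain t$ on the final position (the latter only being demanded when $n+1 \in \filter(h)$ in the Eq-args case). For $s\ v \ismain t\ v$ I would case split on which rule witnesses $s \ismain t$ and append $v$ to both argument lists, invoking reflexivity on the new position.

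Finally, for \emph{stability} I would induct on $s$. The Eq-args case goes through directly: substitution commutes with the argument list, the head is preserved, and the IH supplies $s_i\gamma \ismain t_i\gamma$ for the relevant arguments. The interesting and delicate case is Eq-mono, where $s = \avar\ s_1 \cdots s_n$ and $t = \avar\ t_1 \cdots t_n$: here $\gamma(\avar)$ may be a compound term $h\ w_1 \cdots w_k$, so that $s\gamma = h\ w_1 \cdots w_k\ s_1\gamma \cdots s_n\gamma$ and $t\gamma = h\ w_1 \cdots w_k\ t_1\gamma \cdots t_n\gamma$ are no longer variable-headed. I would handle this by applying Eq-mono if $h$ is a variable and Eq-args if $h$ is a function symbol, using reflexivity on each $w_j$ (the already-established first property) and the IH on the $s_i\gamma \ismain t_i\gamma$.

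I expect this rule switch in the Eq-mono case of stability to be the main conceptual obstacle, since the witnessing rule of the conclusion is different from that of the hypothesis. Everything is resolved by the fact that reflexivity is established first, so that the arguments $w_j$ arising from $\gamma(\avar)$ can be equated to themselves without further work; the equality of filters and of $\eqpred$-classes on the new head is trivial because the head is literally the same symbol on both sides.
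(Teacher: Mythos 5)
Your proposal is correct and follows essentially the same route as the paper: induction on term structure for reflexivity, symmetry, transitivity and stability, with the key point being the case analysis on the head of $\gamma(\avar)$ in the Eq-mono case of stability (switching to Eq-args when the head is a function symbol, using the previously established reflexivity on the arguments of $\gamma(\avar)$), and a direct non-inductive argument for monotonicity of application. The only cosmetic difference is that you split monotonicity into left and right application steps where the paper treats $s\ t \ismain s'\ t'$ in one go; both yield context monotonicity.
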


\begin{proof}
By a straighforward induction on the size of a given term $s$ we can prove:

\emph{Transitivity:}
if $s \ismain t \ismain u$ then $s \ismain u$.

\emph{Reflexivity:} $s \ismain s$

\emph{Symmetry:} if $s \ismain t$ then $t \ismain s$

\emph{Stability:} if $s \ismain t$ and $\gamma$ is a substitution, then $s\gamma \ismain t\gamma$;
  for the case $s = x\ s_1 \cdots s_n$ we do a case analysis on $\gamma(x)$ (which must either have
  a form $y\ w_1 \cdots w_k$ or $\afun\ w_1 \cdots w_k$), and the reflexivity property we have
  shown above (to find $w_i \ismain w_i$)

For monotonicity, we do not need the induction.  Suppose $s :: \atype \arrtype \btype$ and $t ::
\atype$, with $s \ismain s'$ and $t \ismain t'$.  There are two cases:
\begin{itemize}
\item $s = x\ s_1 \cdots s_n$ and $s' = x\ s_1' \cdots s_n'$ with each $s_i \ismain s_i'$; since
  $t \ismain t'$ also $x\ s_1 \cdots s_n\ t \ismain x\ s_1' \cdots s_n'\ t'$
\item $s = \afun\ s_1 \cdots s_n$ and $s' = \bfun\ s_1' \cdots s_n'$ with $\afun,\bfun$ having the
  same type structure and filter, and being equal in the precedence; this is not affected by adding
  an extra argument, so we complete as above.
  \qedhere
\end{itemize}
\end{proof}

\subsection{Decrease}

Our other relations are defined through a mutual recursion.  Specifically, for terms $s,t$:

\begin{itemize}
\item $s \geqmain t$ if $s \ismain t$ or $s \grmain t$
\item $s \grmain t$ if $s$ and $t$ have the same type structure, and:
  \begin{description}
  \item[(Gr-mono)] $s = x\ s_1 \cdots s_n$ and $t = x\ t_1 \cdots t_n$ and each $s_i \geqmain t_i$
    and some $s_i \grmain t_i$, or
  \item[(Gr-args)] $s = \afun\ s_1 \cdots s_n$ and $t = \bfun\ t_1 \cdots t_n$ for $\afun,\bfun$
    function symbols with the same type structure such that $\afun \eqpred \bfun$,
    $\filter(\afun) = \filter(\bfun)$, and 
    $s_i \geqmain t_i$ for all $i \in \filter(\afun) \cap \{1,\dots,n\}$, and
    $s_i \grmain t_i$ for some $i \in \filter(\afun) \cap \{1,\dots,n\}$
  \item[(Gr-rpo)] $s \rpomain t$
  \end{description}
\item $s \rpomain t$ if $s = \afun\ s_1 \cdots s_n$ with $\afun :: \atype_1 \arrtype \dots
  \arrtype \atype_m \arrtype \asort$ and $\{n+1,\dots,m\} \subseteq \filter(\afun)$, and:
  \begin{description}
  \item[(Rpo-select)] $s_i \geqmain t$ for some $i \in \filter(\afun) \cap \{1,\dots,n\}$
  \item[(Rpo-appl)] $t = t_0\ t_1 \cdots t_m$ and $s \rpomain t_i$ for $1 \leq i \leq m$
  \item[(Rpo-copy)] $t = \bfun\ t_1 \cdots t_m$ with $\afun \grpred \bfun$ and 
    $s \rpomain t_i$ for all $i \in \filter(\bfun) \cap \{1,\dots,m\}$
  \item[(Rpo-lex)] $t = \bfun\ t_1 \cdots t_m$ with $\afun \eqpred \bfun$ and there is some
    $i \in \filter(\afun) \cap \filter(\bfun) \cap \{1,\dots,\min(n,m)\}$
    such that all of the following hold:
    \begin{itemize}
    \item $\filter(\afun) \cap \{1,\dots,i\} = \filter(\bfun) \cap \{1,\dots,i\}$
    \item $s_j \ismain t_j$ for $j \in \{1,\dots,i-1\} \cap \filter(\afun)$
    \item $s_i \grmain t_i$
    \item $s \rpomain t_j$ for $j \in \{i+1,\dots,m\} \cap \filter(\bfun)$
    \end{itemize}
  \end{description}
\end{itemize}

\subsection{Properties}

We make several observations:

\begin{lemma}\label{lem:compatibility}
If $s \ismain t\ R\ u$ for $R \in \{\ismain,\geqmain,\grmain,\rpomain\}$ then $s\ R\ u$.
\end{lemma}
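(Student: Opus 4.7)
The plan is to prove all four statements simultaneously by induction on $|s|+|t|+|u|$. For $R = \ismain$ the claim is just transitivity of $\ismain$ (Lemma~\ref{lem:ismain}). The remaining three are handled at each induction step in a fixed order: first $\rpomain$, then $\grmain$ (which may invoke the $\rpomain$-conclusion on the current triple via clause (Gr-rpo)), and finally $\geqmain$ (which reduces to transitivity of $\ismain$ or to the $\grmain$-case just established).

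For the $\rpomain$ step, suppose $s \ismain t$ and $t \rpomain u$. Because every $\rpomain$-clause requires $t = \afun\ t_1 \cdots t_n$ to be headed by a function symbol, (Eq-args) forces $s = \bfun\ s_1 \cdots s_n$ with $\bfun \eqpred \afun$, $\filter(\bfun) = \filter(\afun)$, same type structure, and $s_i \ismain t_i$ for every $i \in \filter(\afun) \cap \{1,\ldots,n\}$. I then case-analyse the $\rpomain$-clause used for $t \rpomain u$. The head conditions transfer to $\bfun$ using transitivity of $\eqpred$ and the observation that $\bfun \eqpred \afun$ together with $\afun \grpred \cfun$ yields $\bfun \grpred \cfun$; the argument conditions transfer via the induction hypothesis applied to strictly smaller triples. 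Concretely, (Rpo-select) uses IH on $(s_i, t_i, u)$; (Rpo-appl) and (Rpo-copy) use IH on $(s, t, u_j)$ with $|u_j| < |u|$; and (Rpo-lex) combines transitivity of $\ismain$ on the weak prefix, IH for the strict position $(s_i, t_i, u_i)$, and IH on $(s,t,u_j)$ for the tail.

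For $R = \grmain$, case-split on the clause used to derive $t \grmain u$. Clauses (Gr-mono) and (Gr-args) mirror (Eq-mono) and (Eq-args): the shape of $s$ is read off from $s \ismain t$, and IH on each pair $(s_i, t_i, u_i)$ delivers the required $\geqmain$ and $\grmain$ relationships. Clause (Gr-rpo) is discharged directly by the $\rpomain$-conclusion already proved at this level. Finally $R = \geqmain$ is immediate, since $t \geqmain u$ means either $t \ismain u$ (apply transitivity) or $t \grmain u$ (apply the $\grmain$-case). The main subtlety is that clause (Gr-rpo) reduces the $\grmain$-claim to the $\rpomain$-claim on the same triple, so $|s|+|t|+|u|$ does not strictly decrease at that step; this is exactly what forces the three strict relations to be proved in the order $\rpomain$, $\grmain$, $\geqmain$ within each induction level. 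Apart from this bookkeeping, the argument is a routine combined case analysis.
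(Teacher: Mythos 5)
Your proof is correct and follows essentially the same case analysis as the paper's; the only difference is the induction measure --- the paper inducts directly on the derivation of $t\ R\ u$ (so the (Gr-rpo) step is handled automatically, its premise being a subderivation), whereas you use $|s|+|t|+|u|$ with a tie-breaking order $\rpomain$, $\grmain$, $\geqmain$ on the three relations, which works equally well. The key observations (reading off the shape of $s$ from $s \ismain t$ via (Eq-args), transferring the head and filter conditions, and inheriting $\{n+1,\dots,m\} \subseteq \filter(\afun)$ so that $s \rpomain u$ is even well-formed) all match the paper's argument.
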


\begin{proof}
By induction on the derivation of $t\ R\ u$.

The case when $R$ is $\ismain$ follows from transitivity of $\ismain$.

The case when $R$ is $\geqmain$ follows from the cases for $\ismain$ and $\grmain$.

The case when $R$ is $\grmain$ follows by a case analysis on the rule used to derive $t \grmain u$.
\begin{itemize}
\item if $t \grmain u$ by Gr-mono, then $s = x\ s_1 \cdots s_n$ and $t = x\ t_1 \cdots t_n$ and
  $u = x\ u_1 \cdots u_n$ and we complete by the induction hypothesis on $s_j \ismain t_j \geqmain
  u_j$ and $s_i \ismain t_i \grmain u_i$;
\item if $t \grmain u$ by Gr-args, then $s = \afun\ s_1 \cdots s_n$ and $\bfun = x\ t_1 \cdots t_n$
  and $u = \cfun\ u_1 \cdots u_n$ with $\afun \eqpred \bfun \eqpred \cfun$ and $\filter(\afun) =
  \filter(\bfun) = \filter(\cfun)$ and we complete by the induction hypothesis on $s_j \ismain t_j
  \geqmain u_j$ and $s_i \ismain t_i \grmain u_i$ (for unfiltered arguments);
\item if $t \grmain u$ by Gr-rpo, we have $s \rpomain u$ and therefore $s \grmain u$ by the
  induction hypothesis.
\end{itemize}

The case when $R$ is $\rpomain$ follows by a case analysis on the rule used to derive $t \rpomain u$.
Note that in this case, $s = \afun\ s_1 \cdots s_n$ and $t = \bfun\ t_1 \cdots t_n$ with
$\afun,\bfun$ having the same type structure $\atype_1 \arrtype \dots \arrtype \atype_m \arrtype
\asort$, $\afun \eqpred \bfun$, $\filter(\afun) = \filter(\bfun) \supseteq \{n+1,\dots,m\}$ and
$s_i \ismain t_i$ whenever $i \in \pi(\afun)$.
\begin{itemize}
\item Rpo-select: $t_i \geqmain u$ for some $i \in \pi(\bfun) = \pi(\afun)$, so
  $s_i \ismain t_i \geqmain u$; we complete with Rpo-select and the induction hypothesis
\item Rpo-appl: $u = u_0\ u_1 \cdots t_m$ and $t \rpomain u_i$ for all $i$, so by the induction
  hypothesis also $s \rpomain u_i$ for all $i$
\item Rpo-copy: $u = \cfun\ u_1 \cdots u_m$ with $\bfun \grpred \cfun$ and $t \rpomain u_i$ for all
  $u_i$ with $i \in \filter(\cfun) \cap \{1,\dots,m\}$; but then also $s \rpomain u_i$ for these
  $i$, and $\afun \eqpred \bfun \grpred \cfun$ implies $\afun \grpred \cfun$
\item Rpo-lex: $u = \cfun\ u_1 \cdots u_m$ with $\bfun \eqpred \cfun$ (so also $\afun \eqpred
  \cfun$ because $\eqpred$ is transitive) and there exists $i$ such that:
  \begin{itemize}
  \item $i \in \filter(\bfun) \cap \{1,\dots,\min(n,m)\} = \filter(\afun) \cap \{1,\dots,\min(n,m)\}$
  \item $s_j \ismain t_j \ismain u_j$ for $j \in \{1,\dots,i-1\} \cap \filter(\afun)$ (as
    $\filter(\afun) = \filter(\bfun)$), so $s_j \ismain u_j$ by transitivity of $\ismain$
  \item $s_i \ismain t_i \grmain u_i$ so by IH $s_i \grmain u_i$,
    since $i \notin \filter(\bfun) = \filter(\afun)$
  \item $s \ismain t \grmain u_j$ for $j \in \{i+1,\dots,m\} \cap \filter(\cfun)$, so also
    $s \grmain u_j$ by the induction hypothesis
    \qedhere
  \end{itemize}
\end{itemize}
\end{proof}

\begin{corollary}\label{cor:compatibility}
If $s \geqmain t \grmain u$ then $s \grmain^+ u$.
\end{corollary}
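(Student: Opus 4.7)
The plan is to unfold the definition of $\geqmain$ and split into two cases. Recall that $\geqmain$ was defined as $\ismain \cup \grmain$, so the hypothesis $s \geqmain t$ gives us either $s \ismain t$ or $s \grmain t$.

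In the first case, $s \ismain t \grmain u$, and Lemma~\ref{lem:compatibility} (applied with $R = \grmain$) yields $s \grmain u$, which in particular gives $s \grmain^+ u$. In the second case, $s \grmain t \grmain u$ is already a chain in $\grmain$ of length two, so $s \grmain^+ u$ holds immediately by definition of the transitive closure.

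Since both cases are essentially immediate, I do not anticipate any real obstacle; the only thing to verify is that I have the right instance of Lemma~\ref{lem:compatibility}, namely the one where $R$ is $\grmain$, which is precisely one of the cases proved there.
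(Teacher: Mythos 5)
Your proof is correct and matches the paper's intent: the corollary is stated without proof precisely because it follows by splitting $s \geqmain t$ into $s \ismain t$ (handled by Lemma~\ref{lem:compatibility} with $R = \grmain$) or $s \grmain t$ (giving a two-step chain). Nothing further is needed.
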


\begin{lemma}\label{lem:reflexive}
$\geqmain$ is reflexive.
\end{lemma}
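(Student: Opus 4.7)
The plan is essentially a one-line appeal to the definition of $\geqmain$ together with Lemma~\ref{lem:ismain}. Recall that $\geqmain$ was defined as $\ismain \cup \grmain$: we have $s \geqmain t$ whenever either $s \ismain t$ or $s \grmain t$. To show reflexivity I only need to produce, for an arbitrary term $s$, one of these two disjuncts with $t = s$.

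The obvious choice is the left disjunct. Lemma~\ref{lem:ismain} already asserts that $\ismain$ is reflexive, so for every term $s$ we have $s \ismain s$, and hence by the definition of $\geqmain$ we immediately obtain $s \geqmain s$. No induction on term structure or case analysis on the shape of $s$ is required here, since all of the work has been done in establishing reflexivity of $\ismain$ (where the induction is on the size of $s$, with separate subcases for $s = x\ s_1 \cdots s_n$ and $s = \afun\ s_1 \cdots s_n$).

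Consequently there is no real obstacle: the lemma is a bookkeeping corollary of Lemma~\ref{lem:ismain}, recorded separately only because reflexivity of $\geqmain$ is one of the properties required in Definition~\ref{def:redpair} for a constrained reduction pair. A one-sentence proof of the form ``By Lemma~\ref{lem:ismain}, $s \ismain s$ for every term $s$, so $s \geqmain s$ by definition of $\geqmain$'' suffices.
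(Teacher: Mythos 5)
Your proposal is correct and matches the paper's proof exactly: both derive reflexivity of $\geqmain$ directly from reflexivity of $\ismain$ (Lemma~\ref{lem:ismain}) via the definition $s \geqmain t$ iff $s \ismain t$ or $s \grmain t$. Nothing further is needed.
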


\begin{proof}
This immediately follows from reflexivity of $\ismain$ (Lemma~\ref{lem:ismain}).
\end{proof}

\begin{lemma}\label{lem:leftmono}
If $s \grmain t$ with $s :: \atype \arrtype \btype$ and $u \geqmain v$ with $u :: \atype$, then
$s\ u \grmain t\ v$.
\end{lemma}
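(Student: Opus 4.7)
The plan is to do case analysis on the rule establishing $s \grmain t$, namely Gr-mono, Gr-args, or Gr-rpo.

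For Gr-mono and Gr-args, the proof is by direct extension. In Gr-mono, $s = x\ s_1 \cdots s_n$ and $t = x\ t_1 \cdots t_n$ share a variable head, with $s_i \geqmain t_i$ for each $i$ and some strict $s_j \grmain t_j$; I would append $u$ and $v$ to the argument lists, noting that $s\ u$ and $t\ v$ still share the head $x$, all componentwise $\geqmain$ comparisons are preserved (with $u \geqmain v$ at the new position), and the strict witness at $j$ is unchanged, so Gr-mono applies directly to give $s\ u \grmain t\ v$. The Gr-args case is analogous, using $\afun \eqpred \bfun$ and $\filter(\afun) = \filter(\bfun)$; whether $n+1$ lies in the shared filter is immaterial, since the strict witness remains at some filtered $j \leq n$.

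The substantive case is Gr-rpo: $s \rpomain t$ forces $s = \afun\ s_1 \cdots s_n$ with $\{n+1,\dots,m\} \subseteq \filter(\afun)$, so $s\ u = \afun\ s_1 \cdots s_n\ u$ still satisfies the Rpo precondition $\{n+2,\dots,m\} \subseteq \filter(\afun)$, and moreover $n+1 \in \filter(\afun)$. I would first apply Rpo-select at position $n+1$ to derive $s\ u \rpomain v$, using $(s\ u)_{n+1} = u \geqmain v$; then I would invoke Rpo-appl with the decomposition $t\ v = (t)\ v$ (taking $t_0 := t$, $t_1 := v$, $m := 1$), whose only sub-condition is $s\ u \rpomain v$, which we have just established. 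This gives $s\ u \rpomain t\ v$, and hence $s\ u \grmain t\ v$ via Gr-rpo. Notably, this argument does not depend on which Rpo sub-rule originally witnessed $s \rpomain t$, only on the structural fact extracted from it.

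The main obstacle is justifying this Rpo-appl step: the decomposition $t\ v = (t)\ v$, which treats the whole of $t$ as the ``head'' $t_0$, must be admissible under the paper's reading of Rpo-appl. If a stricter reading is intended (requiring $t_0$ to be a bare variable or function symbol rather than an application), one would instead need to case-analyze on the Rpo rule used to derive $s \rpomain t$, and the Rpo-select sub-case -- where only $s_i \geqmain t$ is in hand and the desired target $t\ v$ has a smaller type structure than $s_i$ -- would require a subtler argument, likely relying on an auxiliary lemma such as ``$s \rpomain t' \Rightarrow s\ u \rpomain t'$'' and on the structure of $t$ to construct a suitable derivation for $s\ u \rpomain t\ v$.
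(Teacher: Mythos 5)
Your handling of Gr-mono and Gr-args matches the paper's, and your observation that $n+1 \in \filter(\afun)$ yields $s\ u \rpomain v$ by Rpo-select is exactly the paper's first step in the Gr-rpo case. The gap is in your application of Rpo-appl. You claim that with the decomposition $t\ v = t_0\ t_1$ (where $t_0 := t$, $t_1 := v$) the ``only sub-condition is $s\ u \rpomain v$''. That reading of Rpo-appl cannot be the intended one: if the head $t_0$ were unconstrained, then from $\afun\ s_1 \rpomain s_1$ (Rpo-select) one would obtain $\afun\ s_1 \rpomain \afun\ s_1$ by choosing $t_0 := \afun$, destroying well-foundedness; and the paper's computability argument (Lemma~\ref{lem:wellfounded:induction}) explicitly invokes the induction hypothesis on $t_0$ in its Rpo-appl case. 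So Rpo-appl also demands $s\ u \rpomain t_0 = t$, and this is precisely what your proposal does not establish.

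The paper supplies the missing piece as a separate claim, proved by induction on the derivation of $\afun\ s_1 \cdots s_n \rpomain t$: appending an argument on the left preserves $\rpomain$, i.e.\ $\afun\ s_1 \cdots s_n \rpomain t$ implies $\afun\ s_1 \cdots s_n\ u \rpomain t$ (all cases go through directly with the induction hypothesis; in Rpo-lex the same index $i$ still lies in $\{1,\dots,\min(n+1,m)\}$). With that in hand, Rpo-appl gives $s\ u \rpomain t\ v$ from $s\ u \rpomain t$ together with $s\ u \rpomain v$, and Gr-rpo concludes. You do name essentially this auxiliary lemma in your final paragraph, but only as a contingency tied to a different worry (whether $t_0$ may be a compound term rather than a bare head), and you do not carry out its proof; as written, your main argument rests on an inadmissible instance of Rpo-appl, so the Gr-rpo case is not yet proved.
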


\begin{proof}
If $s \grmain t$ by Gr-mono or Gr-args, then $s\ u \grmain t\ v$ by Gr-mono or Gr-args as well
(note that in Gr-args, $u$ is filtered away if and only if $v$ is).  So consider the case
$s \grmain t$ by Gr-rpo; that is, $s =  \afun\ s_1 \cdots s_n$ and $s \rpomain t$, which implies
$n+1 \in \filter(\afun)$.  If we can see that $\afun\ s_1 \cdots s_n\ u \rpomain t$ as
well we have $s\ u \rpomain t\ v'$ by Rpo-appl (since $s\ u \rpomain v$ by Rpo-select, because
$n+1 \in \filter(\afun)$ and $u \geqmain v$), and therefore $s\ u \grmain t\ v$ by Gr-rpo.

It remains to prove that if $\afun\ s_1 \cdots s_n \rpomain t$ then $\afun\ s_1 \cdots s_n\ u
\rpomain t$.  We prove that this holds by induction on the derivation of $\afun\ s_1 \cdots s_n
\rpomain t$.  Yet, all cases are simple with the induction hypothesis; we only observe that for
the Rpo-lex case, if $i \in \filter(\afun) \cap \filter(\bfun) \cap \{1,\dots,\min(n,m)\}$ then
also $i \in \filter(\afun) \cap \filter(\bfun) \cap \{1,\dots,\min(n+1,m)\}$, so we can use the
same index.
\end{proof}

\begin{lemma}\label{lem:monotonic}
$\geqmain$ is monotonic.
\end{lemma}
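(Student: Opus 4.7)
The plan is to reduce monotonicity to two single-application facts and then induct on the depth of $C$: namely (i) if $s \geqmain t$ then $s\ u \geqmain t\ u$, and (ii) if $u \geqmain u'$ then $s\ u \geqmain s\ u'$, whenever the resulting terms are well-typed. Because every context is built by composing such one-step extensions, monotonicity of $\geqmain$ then follows.

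For (i), I split on whether $s \ismain t$ or $s \grmain t$. The $\ismain$ subcase is dispatched by monotonicity of $\ismain$ (Lemma~\ref{lem:ismain}) together with reflexivity $u \ismain u$, giving $s\ u \ismain t\ u$. The $\grmain$ subcase is exactly what Lemma~\ref{lem:leftmono} provides, applied with $u \geqmain u$ from Lemma~\ref{lem:reflexive}. Either way $s\ u \geqmain t\ u$.

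For (ii), the $\ismain$ subcase is again handled by monotonicity of $\ismain$ combined with $s \ismain s$. The $\grmain$ subcase proceeds by case analysis on the head of $s$. If $s = x\ s_1 \cdots s_n$, then Gr-mono yields $s\ u \grmain s\ u'$, using reflexivity at the old argument positions and the strict decrease $u \grmain u'$ at the new last position. If $s = \afun\ s_1 \cdots s_n$ with $n+1 \in \filter(\afun)$, then Gr-args applies in the analogous way.

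The subtle subcase that I expect to be the main obstacle is when $s = \afun\ s_1 \cdots s_n$ and $n+1 \notin \filter(\afun)$: the strict decrease at position $n+1$ cannot propagate through Gr-args, and no older position decreases. The key observation is that Eq-args imposes no requirement on an unfiltered argument position, so reflexivity of $\ismain$ on $s_1,\dots,s_n$ alone already yields $s\ u \ismain s\ u'$. Thus the filter ``absorbs'' the strict step, and $\geqmain$ is still preserved.
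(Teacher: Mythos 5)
Your proof is correct and follows essentially the same route as the paper: the left-strict case via Lemma~\ref{lem:leftmono}, the right-strict case via Gr-mono/Gr-args, and the crucial observation that when $n+1 \notin \filter(\afun)$ the strict step is absorbed by Eq-args. The only cosmetic difference is that the paper proves the simultaneous statement ($s \geqmain s'$ and $t \geqmain t'$ imply $s\ t \geqmain s'\ t'$) in one case analysis, whereas you split it into the two single-position facts, which matches the definition of monotonicity equally well.
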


\begin{proof}
Suppose $s :: \atype \arrtype \btype$ and $t :: \atype$, and $s \geqmain s'$ and $t \geqmain t'$.
We must see that $s\ t \geqmain s'\ t'$.

First, if $s \ismain s'$ and $t \ismain t'$, we are done by monotonicity of $\ismain$.

Second, if $s \ismain s'$ and $t \grmain t'$, there are two options:
\begin{itemize}
\item $s = x\ s_1 \cdots s_n$ and $s' = x\ s_1' \cdots s_n'$ with each $s_i \ismain s_i'$ and
  therefore $s_i \geqmain s_i'$; as both $t \geqmain t'$ and $t \grmain t'$ we have $s\ t \grmain
  s'\ t'$ by Gr-mono;
\item $s = \afun\ s_1 \cdots s_n$ and $s' = \bfun\ s_1' \cdots s_n'$ with $\afun \approx \bfun$ and
  $\filter(\afun) = \filter(\bfun$) and $s_i \ismain s_i'$ for $i \in \filter(\afun) \cap \{1,\dots,
  n\}$;
  so if $n+1 \in \filter(\afun) = \filter(\bfun)$ then $s\ t = \afun\ s_1 \cdots s_n\ t \grmain
  \bfun\ s_1' \cdots s_n'\ t' = s'\ t'$ by Gr-args;
  and if $n+1 \notin \filter(\afun) = \filter(\bfun)$ then $s\ t \ismain s'\ t'$ by Eq-args.
\end{itemize}

Finally, if $s \grmain s'$ then $s\ t \grmain s'\ t'$ by Lemma!\ref{lem:leftmono}.
\end{proof}

It remains to prove that $\grmain$ is well-founded.  For this, we use the notion of
\emph{computability}.

\begin{definition}
A term $s$ is terminating if there is no infinite sequence $s \grmain s_1 \grmain s_2 \grmain \dots$

A term $s :: \atype_1 \arrtype \dots \arrtype \atype_m \arrtype \asort$ is computable if for all
computable $t_1 :: \atype_1,\dots,t_m :: \atype_m$, the term $s\ t_1 \cdots t_m$ is terminating.
(This is well-defined by induction on types.)
\end{definition}

\begin{lemma}\label{lem:computability:basics} We observe:
\begin{enumerate}
\item\label{lem:computability:basics:vars}
  All variables are computable.
\item\label{lem:computability:basics:compterm}
  If $s$ is computable, then it is terminating.
\item\label{lem:computability:basics:decrease}
  If $s$ is computable and $s \geqmain t$, then also $t$ is computable.
\end{enumerate}
\end{lemma}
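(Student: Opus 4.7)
The plan is to prove parts (\ref{lem:computability:basics:vars}) and (\ref{lem:computability:basics:compterm}) by simultaneous induction on the type structure---each naturally needs the other at strictly smaller types---and then to derive part (\ref{lem:computability:basics:decrease}) from them.

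For (\ref{lem:computability:basics:compterm}) at a base type $\asort$, I will simply note that the definition of computability with $m=0$ already states termination. At a higher type $\atype_1 \arrtype \dots \arrtype \atype_m \arrtype \asort$ I will pick---by the induction hypothesis on (\ref{lem:computability:basics:vars}) applied at the smaller types $\atype_i$---computable variables $\avar_1,\dots,\avar_m$. Then $s\ \avar_1 \cdots \avar_m$ is terminating because $s$ is computable, and any infinite $\grmain$-chain starting from $s$ itself would lift, via repeated application of Lemma~\ref{lem:leftmono}, to an infinite chain from $s\ \avar_1 \cdots \avar_m$---a contradiction.

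For (\ref{lem:computability:basics:vars}) at a base type, a bare variable admits no $\grmain$-step at all (none of Gr-mono, Gr-args, or Gr-rpo applies), so it is trivially terminating and hence computable. In the inductive case the head is still a variable, so any step $\avar\ t_1 \cdots t_m \grmain u$ must come from Gr-mono, forcing $u = \avar\ t_1' \cdots t_m'$ with $t_i \geqmain t_i'$ for every $i$ and $t_i \grmain t_i'$ for at least one $i$. Invoking the induction hypothesis on (\ref{lem:computability:basics:compterm}) at the smaller types $\atype_i$ tells me each $t_i$ is terminating. I then argue by pigeonhole that along any supposed infinite chain some fixed position $i$ is strictly decreased infinitely often, and collapse the intervening $\geqmain$-steps via Corollary~\ref{cor:compatibility} ($\geqmain \cdot \grmain \subseteq \grmain^+$) to extract a genuine infinite $\grmain^+$-chain starting from $t_i^{(0)}$, contradicting its termination. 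This reduction analysis is the main obstacle; the other parts are bookkeeping on top of the monotonicity and compatibility results.

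For (\ref{lem:computability:basics:decrease}), given $s$ computable and $s \geqmain t$, I will fix arbitrary computable $u_1,\dots,u_m$ and show that $t\ u_1 \cdots u_m$ is terminating. Applying reflexivity of $\geqmain$ (Lemma~\ref{lem:reflexive}) together with an $m$-fold use of monotonicity (Lemma~\ref{lem:monotonic}) yields $s\ u_1 \cdots u_m \geqmain t\ u_1 \cdots u_m$; computability of $s$ makes the left-hand side terminating, and then Corollary~\ref{cor:compatibility} turns any hypothetical infinite chain from $t\ u_1 \cdots u_m$ into one from $s\ u_1 \cdots u_m$, a contradiction. Hence $t$ is computable.
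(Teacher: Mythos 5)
Your proof is correct, and it follows the paper's overall plan (induction on type structure, with the arrow case of part~(\ref{lem:computability:basics:vars}) analysing Gr-mono steps, part~(\ref{lem:computability:basics:compterm}) applying $s$ to fresh computable variables and lifting chains via Lemma~\ref{lem:leftmono}, and part~(\ref{lem:computability:basics:decrease}) lifting chains through an application context). There are two genuine differences worth noting. First, the paper keeps all three claims in one mutual induction, and in the arrow case of~(\ref{lem:computability:basics:vars}) it runs an \emph{inner} well-founded induction on the tuple $(s_1,\dots,s_m)$ ordered by the place-wise extension of $(\ismain,\grmain)$; this forces it to invoke claim~(\ref{lem:computability:basics:decrease}) to know that the reduct's arguments are again computable so the inner induction hypothesis applies. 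You instead extract a contradiction directly: by pigeonhole some position is strictly decreased infinitely often along a hypothetical infinite chain, and Corollary~\ref{cor:compatibility} (iterated, to get $\geqmain^*\cdot\grmain\,\subseteq\,\grmain^+$ --- a one-line extension you should state) collapses the weak steps into an infinite $\grmain$-chain from the original argument $t_i$, whose termination comes from~(\ref{lem:computability:basics:compterm}) alone. This makes explicit the well-foundedness of the place-wise extension that the paper implicitly relies on, and it removes the dependence of~(\ref{lem:computability:basics:vars}) on~(\ref{lem:computability:basics:decrease}). Second, you prove~(\ref{lem:computability:basics:decrease}) outside the induction entirely, as a direct consequence of reflexivity, monotonicity of $\geqmain$ (Lemma~\ref{lem:monotonic}) and Corollary~\ref{cor:compatibility}; the paper's case split on $s \ismain t$ versus $s \grmain t$ using monotonicity of $\ismain$ and Lemma~\ref{lem:leftmono} is doing the same work by hand. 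Both routes are sound; yours has a cleaner dependency structure, while the paper's inner induction avoids the explicit infinite-chain/pigeonhole reasoning.
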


\begin{proof}
By mutual induction on types.

\textbf{Base types.}

(\ref{lem:computability:basics:vars})
Clearly, base-type variables are computable, since there is no $t$ such that $x \grmain t$.

(\ref{lem:computability:basics:compterm})
Base-type computable terms are terminating by definition.

(\ref{lem:computability:basics:decrease})
If $s$ has base type and $s \geqmain t$, then $t$ is terminating: if $t \grmain t_1 \grmain
\dots$ then either $s \grmain t_1 \grmain \dots$ (if $s \ismain t$) or $s \grmain t \grmain
t_1 \grmain \dots$ (if $s \grmain t$), obtaining a contradiction.

\textbf{Arrow type $\atype$: $\atype_1 \arrtype \dots \arrtype \atype_m \arrtype \asort$ with
$m > 0$.}

(\ref{lem:computability:basics:vars})
Let $\avar :: \atype$, and let $s_1 :: \atype_1,\dots,s_m :: \atype_m$ be computable terms.
By induction hypothesis (\ref{lem:computability:basics:compterm}), we can do induction on
$(s_1,\dots,s_m)$ ordered with the place-wise extension of $(\ismain,\grmain)$ to show that
$\avar\ s_1 \cdots s_m$ is terminating.  The only applicable rule is Gr-mono, so if $\avar\ 
s_1 \cdots s_m \grmain t$ then $t = \avar\ t_1 \cdots t_m$ with each $s_i \geqmain t_i$ and
some $s_i \grmain t_i$.  By induction hypothesis (\ref{lem:computability:basics:decrease}),
each $t_i$ is computable, and the tuple is smaller so we complete.

(\ref{lem:computability:basics:compterm})
Let $s:: \atype$ and assume towards a contradiction that $s$ is terminating.
Let $x_1 :: \atype_1,\dots,x_m ::\atype_m$ be variables; by induction hypothesis
(\ref{lem:computability:basics:vars}) they are computable, so $s\ x_1 \cdots x_m$ is terminating.
But Lemma~\ref{lem:leftmono} implies that if $s \grmain s_1 \grmain s_2 \grmain \dots$ then also
$s\ \vec{x} \grmain s_1\ \vec{x}\ \grmain s_2\ \vec{x} \grmain \dots$, contradicting the termination
assumption.

(\ref{lem:computability:basics:decrease})
If $s \grmain t$ then $t$ has the same type structure $\atype$ as $s$.  Let
$u_1 :: \atype_1,\dots,u_n :: \atype_n$ be computable; we must show that
$t\ u_1 \cdots u_n$ is terminating.  But suppose not: $t\ u_1 \cdots u_n \grmain v_1 \grmain v_2
\grmain \dots$.  Then either $s\ u_1 \cdots u_n \grmain v_1 \grmain v_2 \grmain \dots$ (if $s
\ismain t$) by monotonicity of $\ismain$, or $s\ u_1 \cdots u_n \grmain t\ u_1 \cdots u_n
\grmain v_1 \grmain v_2 \grmain\dots$ (if $s \grmain t$) by Lemma \ref{lem:leftmono}.  Hence,
$s\ u_1 \cdots u_n$ is non-terminating, contradicting computability of $s$.
\end{proof}

\begin{lemma}\label{lem:wellfounded:induction}
Let $\afun :: \atype_1 \arrtype \dots \arrtype \atype_m \arrtype \asort$ and terms $s_1 :: \atype_1,
\dots,s_n : \atype_n$ be given such that
(1) any function symbol $\bfun$ with $\afun \grpred \bfun$ is computable;
(2) for all $i \in \filter(\afun)$: $s_i$ is computable, and
(3) if $\afun \eqpred \bfun$ and $t_1,\dots,t_k$ are terms such that $[s_i \mid i \in 
  \filter(\afun)]\ (\ismain,\grmain)_{\mathtt{lex}}\ [t_i \mid i \in \filter(\bfun)]$,
  then $\bfun\ t_1 \cdots t_k$ is terminating.
Then for any $t$ such that $\afun\ s_1 \cdots s_n \rpomain t$: $t$ is computable.
\end{lemma}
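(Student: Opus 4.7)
The plan is to induct on the derivation of $\afun\ s_1 \cdots s_n \rpomain t$ and do a case analysis on the Rpo-rule used in the last step. For \textbf{Rpo-select}, the rule gives $s_i \geqmain t$ for some filtered $i$; hypothesis~(2) makes $s_i$ computable, and Lemma~\ref{lem:computability:basics}(\ref{lem:computability:basics:decrease}) hands us $t$ computable. For \textbf{Rpo-appl}, $t = t_0\ t_1 \cdots t_m$ with $\afun\ s_1 \cdots s_n \rpomain t_i$ for each $i \geq 1$; the IH makes each $t_i$ computable, while the head $t_0$ is computable by its form (using Lemma~\ref{lem:computability:basics}(\ref{lem:computability:basics:vars}) when it is a variable), so $t$ is a computable head applied to computable arguments.

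For \textbf{Rpo-copy}, $t = \bfun\ t_1 \cdots t_{m'}$ with $\afun \grpred \bfun$, so hypothesis~(1) gives $\bfun$ computable and the IH gives each $t_i$ with $i \in \filter(\bfun) \cap \{1,\ldots,m'\}$ computable. To show $t$ computable I pick arbitrary computable $u_1,\ldots,u_\ell$ saturating $\bfun$ to base type and aim to prove $\bfun\ t_1 \cdots t_{m'}\ u_1 \cdots u_\ell$ terminating. My trick is to replace each unfiltered $t_i$ by a fresh variable $y_i$ (computable by Lemma~\ref{lem:computability:basics}(\ref{lem:computability:basics:vars})); the modified term has only computable arguments and is therefore terminating by computability of $\bfun$. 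The original and modified terms are $\ismain$-equivalent via Eq-args (only filtered positions matter for $\ismain$), and Lemma~\ref{lem:compatibility} implies termination is $\ismain$-invariant, since an infinite reduction from one yields one from the other via $s \ismain t \grmain u \Rightarrow s \grmain u$.

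For \textbf{Rpo-lex}, $t = \bfun\ t_1 \cdots t_{m'}$ with $\afun \eqpred \bfun$ and a strict lex-decrease between filtered tuples at some position $i \leq m'$. For arbitrary computable $u_1,\ldots,u_\ell$, I extend the right-hand tuple to $[t_1,\ldots,t_{m'},u_1,\ldots,u_\ell]$ (filtered by $\bfun$); the strict lex-decrease at position $i \leq m'$ persists because lex-order ignores entries beyond the mismatch point. Hypothesis~(3) then directly gives termination of $\bfun\ t_1 \cdots t_{m'}\ u_1 \cdots u_\ell$, proving computability of $t$.

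The main obstacle lies in the Rpo-copy (and, to a lesser degree, Rpo-lex) case: upgrading ``$\bfun$ is computable'' or ``$\bfun\ t_1 \cdots t_{m'}$ is terminating'' into computability of $\bfun\ t_1 \cdots t_{m'}$ when some $t_i$'s at unfiltered positions are not known to be computable. The way around this is to exploit the weakly-monotonic nature of the ordering: no top-level Gr- or Rpo-step inspects unfiltered positions, so replacing those slots with (or extending by) computable arguments produces an $\ismain$-equivalent term whose termination is directly available, and Lemma~\ref{lem:compatibility} transfers termination back along $\ismain$.
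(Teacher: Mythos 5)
Your proof follows the same overall route as the paper's: induction on the derivation with a case analysis on the last rule, using hypothesis (2) together with Lemma~\ref{lem:computability:basics}(\ref{lem:computability:basics:decrease}) for Rpo-select, hypothesis (1) for Rpo-copy, and hypothesis (3) applied to the extended argument tuple for Rpo-lex. Your Rpo-copy case is in fact \emph{more} careful than the paper's one-line argument: the paper simply states that since $\bfun$ and the filtered $t_i$ are computable ``the whole result is'', silently passing over the unfiltered arguments, which are not known to be computable. Your device of replacing the unfiltered positions by fresh variables and transferring termination across $\ismain$ via Lemma~\ref{lem:compatibility} (and symmetry of $\ismain$) makes that step explicit, and it is correct.

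The one genuine weak point is Rpo-appl. You leave $t_0$ outside the induction hypothesis and claim it is ``computable by its form'', which is justified only when $t_0$ is a variable; if $t_0$ is a function symbol you would need Lemma~\ref{lem:functionscomputable}, which is itself proved from the present lemma --- a circularity. The paper instead obtains computability of $t_0$ from the induction hypothesis, i.e., it reads Rpo-appl as requiring $s \rpomain t_i$ for \emph{all} $i \in \{0,\dots,m\}$, head included. That reading is in fact forced: under the literal ``$1 \leq i \leq m$'' one could derive $\afun\ x \rpomain \afun\ x$ (take $t_0 = \afun$ and $t_1 = x$, discharging the only side condition by Rpo-select), which would make both well-foundedness and the present lemma fail. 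So the repair is not to argue about the shape of $t_0$ but to include $t_0$ among the terms covered by the induction hypothesis.
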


\begin{proof}
By induction on the derivation of $\afun\ s_1 \cdots s_n \rpomain t$.

If this follows by Rpo-select, then $s_i \geqmain t$ for some $i \in \filter(\afun)$, so $t$ is
computable by (2) and Lemma~\ref{lem:computability:basics}(\ref{lem:computability:basics:decrease}).

If it follows by Rpo-appl, then $t = t_0\ t_1 \cdots t_k$ and by the induction hypothesis each
$t_i$ is computable, and therefore by definition so is their application.

If it follows by Rpo-copy, then $t = \bfun\ t_1 \cdots t_k$ with $\afun \grpred \bfun$, and by the
induction hypothesis each $t_i$ with $i \in \filter(\bfun)$ is computable; since $\bfun$ is
computable, the whole result is.

If it follows by Rpo-lex, then $t = \bfun\ t_1 \cdots t_k :: \btype_{k+1} \arrtype \dots \arrtype
\btype_p \arrtype \bsort$ with $\afun \eqpred \bfun$.  We must see that for any computable $t_{k+1}
:: \btype_{k+1},\dots,t_p :: \btype_p$ we have $\bfun\ t_1 \cdots t_p$ terminates.  But this is the
case by assumption, because $[s_i \mid i \in \filter(\afun)]\ (\ismain,\grmain)_{\mathtt{lex}}\ 
[t_i \mid i \in \filter(\bfun)]$: we see that this holds because, for some index $i$:
\begin{itemize}
\item $\filter(\afun) \cap \{1,\dots,i\} = \filter(\bfun) \cap \{1,\dots,i\}$, and
\item $s_j \eqpred t_j$ for $j \in \filter(\afun) \cap \{1,\dots,j\}$ \\
  So: $[s_j \mid j \in \filter(\afun) \wedge j \in \{1,\dots,i-1\}] \eqpred_{\text{placewise}}
  [t_j \mid j \in \filter(\afun) \wedge j \in \{1,\dots,i-1\}]$.
\item $i \in \filter(\afun) = \filter(\bfun)$ and $s_i \grmain t_i$ \\
  Hence: any extension of $[s_j \mid j \in \filter(\afun) \wedge j \in \{1,\dots,i\}]\ 
  (\ismain,\grmain)_{\mathtt{lex}}$ any extension of
  $[t_j \mid j \in \filter(\afun) \wedge j \in \{1,\dots,i\}]$.
  \qedhere
\end{itemize}
\end{proof}

\begin{lemma}\label{lem:functionscomputable}
Every function symbol is computable.
\end{lemma}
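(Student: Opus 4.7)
The plan is to use well-founded induction on $\afun$ with respect to the precedence $\grpred$, paired with a nested induction on filtered argument tuples. At the outer step I fix $\afun$ and assume that every $\bfun$ with $\afun \grpred \bfun$ has already been shown to be computable.

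To make the inner induction handle all symbols in the $\eqpred$-class of $\afun$ uniformly, I would strengthen the claim to: for every $\bfun$ with $\bfun \eqpred \afun$ and arity $k$, and every choice of computable $t_1, \dots, t_k$, the term $\bfun\ t_1 \cdots t_k$ is terminating. This strengthened form directly yields computability of each such $\bfun$, and in particular of $\afun$. The inner induction proceeds on the filtered tuple $[t_i \mid i \in \filter(\bfun)]$ under the lexicographic extension of $(\ismain, \grmain)$; this ordering is well-founded on tuples of computable terms by Lemma~\ref{lem:computability:basics}(\ref{lem:computability:basics:compterm}), given the bounded-arity assumption on symbols equivalent to $\afun$.

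The inductive step then amounts to showing every $\grmain$-reduct $u$ of $\bfun\ t_1 \cdots t_k$ is terminating. Rule Gr-mono cannot fire at a function-symbol head. If Gr-args fires then $u = \cfun\ t_1' \cdots t_k'$ with $\cfun \eqpred \bfun$, matching filter, and each filtered $t_i'$ computable by Lemma~\ref{lem:computability:basics}(\ref{lem:computability:basics:decrease}); the filtered tuple has strictly decreased in the lex ordering, so the inner IH applies directly. If Gr-rpo fires then $\bfun\ t_1 \cdots t_k \rpomain u$, and I would apply Lemma~\ref{lem:wellfounded:induction} to $\bfun$ and $t_1, \dots, t_k$: condition (1) follows from the outer IH (using $\bfun \eqpred \afun$ to transfer $\grpred$-descendants), condition (2) is by assumption, and condition (3) is precisely the inner IH, delivering $u$ computable and hence terminating.

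The main technical point will be organizing the nested induction cleanly, so that the three premises of Lemma~\ref{lem:wellfounded:induction} line up exactly with the outer IH, the assumption on the arguments, and the inner IH; once that structure is in place, no further combinatorial bookkeeping should be needed.
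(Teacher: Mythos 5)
Your proposal is correct and follows essentially the same route as the paper: an outer induction on $\grpred$ and an inner induction on the filtered argument tuple under $(\ismain,\grmain)_{\mathtt{lex}}$, with Gr-args handled by the inner hypothesis and Gr-rpo discharged via Lemma~\ref{lem:wellfounded:induction}. Your explicit strengthening to the whole $\eqpred$-class and your observation that Gr-mono cannot fire at a function-symbol head (the paper's case analysis mislabels Gr-args as Gr-mono there) only make the same argument slightly cleaner.
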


\begin{proof}
We will prove that for all $\afun :: \atype_1 \arrtype \dots \arrtype \atype_m \arrtype \asort$,
all terms $s_1 :: \atype_1,\dots,s_m :: \atype_n$ such that $s_i$ is computable for $i \in
\{1,\dots,n\} \cap \filter(\afun)$, that $\afun\ s_1 \cdots s_m$ is terminating.
We prove this by induction first on $\afun$, ordered with $\grpred$, and second on
$[s_i \mid i \in \{1,\dots,n\} \cap \filter(\afun)]$, ordered with
$(\ismain,\grmain)_{\mathtt{lex}}$.
To see that the latter is indeed a terminating relation, note that by
Lemma~\ref{lem:computability:basics}, $\grmain$ is terminating on the set of all computable terms
(and this set is closed under $\grmain$), and that the lexicographic extension of a terminating
relation is terminating when the length of sequences is bounded (as we assumed is the case: the
maximal arity of symbols $\eqpred$-equivalent to $\afun$ is bounded).

To see that $\afun\ s_1 \cdots s_n$ is terminating, it suffices to show that $t$ is terminating if
$\afun\ s_1 \cdots s_n \grmain t$.  But there are only two cases to consider: if $\afun\ s_1 \cdots
s_n \grmain t$ by Gr-mono, then $t$ is terminating by the second induction hypothesis because at
least one argument in $\filter(\afun)$ is decreased; and if $\afun\ s_1 \cdots s_n \grmain t$ by
Gr-rpo, then $t$ is computable, and therefore terminating, by Lemma~\ref{lem:wellfounded:induction}.
\end{proof}

\begin{corollary}
$\grmain$ is a well-founded relation on the set of terms.
\end{corollary}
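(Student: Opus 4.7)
The plan is to show that every term $s$ is computable; once that is established, Lemma~\ref{lem:computability:basics}(\ref{lem:computability:basics:compterm}) gives that every term is terminating, which is precisely well-foundedness of $\grmain$.

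I would prove ``every term is computable'' by structural induction on $s$. The base cases are the two kinds of atoms: a variable is computable by Lemma~\ref{lem:computability:basics}(\ref{lem:computability:basics:vars}), and a (partially applied) function symbol $\afun$ with no arguments is computable by Lemma~\ref{lem:functionscomputable} (taking $n = 0$ in its statement, noting the condition on computability of arguments in $\filter(\afun)$ is then vacuously satisfied). The inductive case is an application $s = s_1\ s_2$. By the induction hypothesis $s_1$ and $s_2$ are both computable, and I need to observe that the application of a computable term to a computable argument is again computable. This is essentially unfolding the definition: if $s_1 :: \atype \arrtype \btype_1 \arrtype \dots \arrtype \btype_k \arrtype \asort$ is computable and $s_2 :: \atype$ is computable, then for any computable $u_1, \dots, u_k$, the term $s_1\ s_2\ u_1 \cdots u_k$ is terminating by computability of $s_1$ applied to the computable sequence $s_2, u_1, \dots, u_k$; hence $s_1\ s_2$ is computable.

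With this, every term is computable, and therefore terminating. Since ``$s$ is terminating'' is exactly the statement that no infinite sequence $s \grmain s_1 \grmain s_2 \grmain \dots$ exists, $\grmain$ is well-founded on the set of terms.

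The only subtlety lies in handling partial applications correctly in the atom case for function symbols: Lemma~\ref{lem:functionscomputable} is stated for arbitrary $\afun\ s_1 \cdots s_n$ with $n \leq m$, so it already covers the case $n = 0$, giving computability of the bare symbol $\afun$. No further work is needed, since the inductive step reduces any compound term to this case together with preservation under application. I expect no real obstacle beyond making this unfolding of the definition explicit.
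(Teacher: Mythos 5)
Your proposal is correct and matches the argument the paper leaves implicit for this corollary: every term is computable (variables by Lemma~\ref{lem:computability:basics}, function symbols by Lemma~\ref{lem:functionscomputable}, and applications by unfolding the definition of computability), hence terminating by Lemma~\ref{lem:computability:basics}, which is exactly well-foundedness of $\grmain$.
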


\section{Constrained HORPO for LCSTRSs}\label{sec:constrained}

Having defined our covering pair $(\geqmain,\grmain)$, we are now ready to define a constrained
reduction pair.

\subsection{Ingredients}

Given an LCSTRS, we assume given the following ingredients:
\begin{itemize}
\item a \emph{precedence} $\geqpred$ on the \emph{non-value symbols} (defined the same as in
  Section~\ref{sec:unconstrained}, and with the same restriction that the maximal arity of
  equivalent symbols should be bounded);
\item a \emph{filter} $\filter$, which is restricted so that $\filter(\afun) = \{1,\dots,m\}$ for
  $\afun :: \atype_1 \arrtype \dots \arrtype \atype_m \arrtype \asort$ a theory symbol
\item for every sort $\asort$, a \emph{well-founded ordering} $\grsort_{\asort}$ and a
  \emph{quasi-ordering} $\geqsort_{\asort}$ on values of this sort, such that for all $v_1,v_2$ of
  sort $\asort$: if $v_1 \geqsort_\asort v_2$, then $v_1 \grsort_\asort v_2$ or $v_1 = v_2$ or
  $v_1$ and $v_2$ are minimal wrt $\grsort_{\asort}$
\end{itemize}
We denote $v_1 \grsort v_2$ if $v_1 \grsort_\asort v_2$ for some sort $\asort$ (and similar for
$\geqsort$).

\subsection{Relations}

We provide three mutually recursive relations.

For terms $s,t$ of the same type structure, constraint $\varphi$ and set $\gtvA$ of variables, we
say $s \geqth t$ if this can be obtained by one of the following clauses:
\begin{description}
\item[$\succeq$Theory] $s,t$ are theory terms whose type is the same sort,
  $\Var(s) \cup \Var(t) \subseteq \gtvA$,
  and $\varphi \Vdash s \geqsort t$
\item[$\succeq$Eq] $s\downarrowcalc = t\downarrowcalc$
\item[$\succeq$Mono] $s$ is not a theory term,
  $s = \avar\ s_1 \cdots s_n$ and $t = \avar\ s_1 \cdots s_n$ and each $s_i \geqth t_i$
\item[$\succeq$Args] $s$ is not a theory term,
  $s = \afun\ s_1 \cdots s_n$ and $t = \bfun\ t_1 \cdots t_n$ for $\afun,\bfun$
  function symbols with the same type structure such that $\afun \eqpred \bfun$,
  $\filter(\afun) = \filter(\bfun)$, and $s_i \geqth t_i$ for all $i \in \filter(\afun) \cap
  \{1,\dots,n\}$
\item[$\succeq$Greater] $s \grth t$
\end{description}

For terms $s,t$ of the same type structure, constraint $\varphi$ and set $\gtvA$ of variables, we
say $s \grth t$ if this can be obtained by one of the following clauses:
\begin{description}
\item[$\succ$Theory] $s,t$ are theory terms whose type is the same sort
  $\Var(s) \cup \Var(t) \subseteq \gtvA$,
  and $\varphi \Vdash s \grsort t$
\item[$\succ$Args] $s$ is not a theory term,
  $s = \afun\ s_1 \cdots s_n$ and $t = \bfun\ t_1 \cdots t_n$ for $\afun,\bfun$
  function symbols with the same type structure such that $\afun \eqpred \bfun$,
  $\filter(\afun) = \filter(\bfun)$, $s_i \geqth t_i$ for all $i \in \filter(\afun) \cap
  \{1,\dots,n\}$, and there is some $i \in \filter(\afun) \cap \{1,\dots,n\}$ with
  $s_i \grth t_i$
\item[$\succ$Rpo] $s \rpoth t$
\end{description}

For terms $s,t$ (which may have different type structures) such that $s$ is not a theory term and
has the form $\afun\ s_1 \cdots s_n$ with $\afun$ a function symbol, constraint $\varphi$ and set
$\gtvA$ of variables, we say $s \rpoth t$ if this can be obtained by one of the following clauses:
\begin{description}
\item[$\rpo$Select] $s_i \grth t$ for some $i \in \filter(\afun) \cap \{1,\dots,n\}$
\item[$\rpo$Appl] $t = t_0\ t_1 \cdots t_m$ and $s \grth t_i$ for $1 \leq i \leq m$
\item[$\rpo$Copy] $t = \bfun\ t_1 \cdots t_m$ with $\afun \grpred \bfun$ and 
    $s \rpoth t_i$ for all $i \in \filter(\bfun) \cap \{1,\dots,m\}$
\item[$\rpo$Lex] $t = \bfun\ t_1 \cdots t_m$ with $\afun \eqpred \bfun$ and there is some
    $i \in \filter(\afun) \cap \filter(\bfun) \cap \{1,\dots,\min(n,m)\}$
    such that all of the following hold:
    \begin{itemize}
    \item $\filter(\afun) \cap \{1,\dots,i\} = \filter(\bfun) \cap \{1,\dots,i\}$
    \item $s_j \geqth t_j$ for $j \in \{1,\dots,i-1\} \cap \filter(\afun)$
    \item $s_i \grth t_i$
    \item $s \rpoth t_j$ for $j \in \{i+1,\dots,m\} \cap \filter(\bfun)$
    \end{itemize}
\item[$\rpo$Th] $t$ is a theory term of base type, with all variables in $\gtvA$
\end{description}

\subsection{Coverage}

To see that $\geqmain$ covers $\succeq$ and $\grmain$ covers $\succ$, we start by extending
$\geqpred$ with values: we let
\begin{itemize}
\item $\afun \geqpred v$ for all non-value symbols $\afun$ and all value symbols $v$
  (but not $v \geqpred \afun$, so we have $\afun \grpred v$);
\item $v_1 \geqpred v_2$ if $v_1 \grsort v_2$ or $v_1 = v_2$ or both $v_1$ and $v_2$ are minimal
  wrt $\grsort$ \\
  (hence $v_1 \grpred v_2$ if and only if $v_1 \grsort v_2$, and $v_1 \eqpred v_2$ if either
  $v_1 = v_2$ or both $v_1$ and $v_2$ are minimal wrt $\grsort$)
\end{itemize}
This precedence is still well-founded, by the combination of the well-foundedness of the original
precedence and well-foundedness of $\grsort$.  Now we see:

\begin{lemma}
Let $\gamma$ be a substitution that respects $\varphi$ and maps all variables in $\gtvA$ to ground
theory terms.  Then:
\begin{itemize}
\item if $s \geqth t$ then $(s\gamma)\downarrowcalc \geqmain (t\gamma)\downarrowcalc$
\item if $s \grth t$ then $(s\gamma)\downarrowcalc \grmain (t\gamma)\downarrowcalc$
\item if $s \rpoth t$ then $(s\gamma)\downarrowcalc \rpomain (t\gamma)\downarrowcalc$
\end{itemize}
\end{lemma}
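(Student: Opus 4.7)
The plan is to prove all three statements simultaneously by induction on the combined depth of the derivations of $s \geqth t$, $s \grth t$ and $s \rpoth t$, dispatching on the final clause applied.

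\emph{Structural clauses.} The clauses $\succeq$Mono, $\succeq$Args, $\succ$Args and $\rpo$Select/Appl/Copy/Lex all mirror a clause of the unconstrained system. In each, $s$ is required to be a non-theory term of the form $\avar\ s_1 \cdots s_n$ or $\afun\ s_1 \cdots s_n$ with $\afun$ a non-theory symbol, so the calculus reduction leaves the outer shape intact: $(s\gamma)\downarrowcalc$ equals either $(\gamma(\avar))\ (s_1\gamma)\downarrowcalc \cdots (s_n\gamma)\downarrowcalc$ or $\afun\ (s_1\gamma)\downarrowcalc \cdots (s_n\gamma)\downarrowcalc$, and analogously for $t$. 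The induction hypotheses then supply exactly the argument-wise relationships needed to apply the analogous unconstrained rule (Eq-mono, Gr-mono, Eq-args, Gr-args, or the corresponding Rpo variant), using also that $\afun$'s filter is unchanged because the head symbol is preserved.

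\emph{Theory-reaching clauses.} The substantive work is in $\succeq$Theory, $\succ$Theory, $\rpo$Th and $\succeq$Eq. For $\succeq$Theory and $\succ$Theory, all variables of $s$ and $t$ lie in $\gtvA$, so $s\gamma$ and $t\gamma$ are ground theory terms of a common sort and $\downarrowcalc$-normalize to values $v_1,v_2$. Since $\gamma$ respects $\varphi$, the entailment $\varphi \Vdash s \geqsort t$ (resp.\ $\grsort$) yields $v_1 \geqsort v_2$ (resp.\ $v_1 \grsort v_2$). By construction of the extended precedence, $v_1 \grsort v_2$ gives $v_1 \grpred v_2$, whence $v_1 \grmain v_2$ via Gr-rpo with Rpo-copy (the argument obligation being vacuous since values are nullary); the remaining subcases give $v_1 = v_2$ or $v_1 \eqpred v_2$, hence $v_1 \ismain v_2$ via Eq-args on empty argument lists, and therefore $v_1 \geqmain v_2$. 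For $\rpo$Th, $(t\gamma)\downarrowcalc$ is a value $v$; since $s$ starts with a non-value symbol $\afun$, so does $(s\gamma)\downarrowcalc$, and $\afun \grpred v$ in the extended precedence, so Rpo-copy closes the case. For $\succeq$Eq I invoke the standard fact that $\downarrowcalc$ commutes with the kind of substitutions we consider, giving $(s\gamma)\downarrowcalc = (t\gamma)\downarrowcalc$, after which reflexivity of $\ismain$ (Lemma~\ref{lem:ismain}) finishes.

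\emph{Main obstacle.} The delicate points are meta-properties of $\downarrowcalc$: that it normalizes ground theory terms of sort type to a value, that it preserves non-theory heads, and that it commutes with substitutions mapping $\gtvA$ to ground theory terms. These are essentially properties of LCSTRSs from~\cite{guo:kop:24}; once they are in hand, the nine clauses dispatch routinely, with the only notable subtlety being the translation of the theory ordering on values into the extended precedence in the two Theory cases (where one must carefully distinguish the minimality subcase that collapses $\geqsort$ into $\eqpred$ rather than $\grpred$).
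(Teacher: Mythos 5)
Your overall strategy matches the paper's: a mutual induction on the derivation, a case split on the last clause applied, and the translation of the value orderings into the extended precedence so that the Theory/$\rpo$Th cases go through Rpo-copy and Eq-args. The Theory-reaching cases and $\succeq$Eq are handled essentially as in the paper.

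However, there is a genuine gap in your treatment of the \emph{structural} clauses. You assert that the shape of $t$ survives normalization ``analogously'' to that of $s$, but the side condition ``$s$ is not a theory term'' is imposed only on $s$, never on $t$. In $\succeq$Args, $\succ$Args, $\rpo$Appl, $\rpo$Copy and $\rpo$Lex, the term $t = \bfun\ t_1 \cdots t_m$ may perfectly well be a (ground, after applying $\gamma$) theory term, in which case $(t\gamma)\downarrowcalc$ collapses to a single value $v$ and the argument-wise correspondence you rely on simply does not exist: there are no arguments of $(t\gamma)\downarrowcalc$ to which the induction hypotheses could be applied, and Eq-args/Gr-args/the Rpo clauses cannot be invoked in the way you describe. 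The paper handles this by adding, in each such clause, the subcase where $(t\gamma)\downarrowcalc$ is a value: since $s$ is not a theory term its head $\afun$ is not a value, so $\afun \grpred v$ in the extended precedence, and one concludes by (Rpo-copy), combined with (Gr-rpo) where a $\grmain$- or $\geqmain$-statement is needed. The same subcase arises in $\succeq$Mono. You need to add this branch to every structural clause for the induction to close.

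A second, smaller under-specification is in $\succeq$Mono: writing $(s\gamma)\downarrowcalc$ as $\gamma(\avar)$ applied to the normalized arguments hides the fact that $\gamma(\avar)\downarrowcalc$ is itself of the form $a\ u_1 \cdots u_k$ with $a$ a variable or a function symbol. To conclude you must case-split on $a$: if $a$ is a variable you use (Eq-mono)/(Gr-mono), but if $a$ is a function symbol you must use (Eq-args)/(Gr-args), and then the arguments $s_i\gamma\downarrowcalc$ sit at shifted positions $k+i$, so whether each comparison is actually required depends on $\filter(a)$ -- the claim ``the head symbol is preserved so the filter is unchanged'' does not apply here, since the head of $s$ was a variable and has no filter. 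The paper's proof spells out exactly this case analysis.
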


\begin{proof}
By a mutual induction on the derivation of $s\ R\ t$.  Consider the case:
\begin{itemize}
\item If $s \geqth t$ by $\succeq$Theory, then $v_1 := (s\gamma)\downarrowcalc$ and
  $(t\gamma)\downarrowcalc =: v_2$ are both values, with $v_1 \geqsort v_2$.
  By definition of $\geqsort$ this implies either $v_1 \grsort v_2$ or $v_1 = v_2$ or $v_1$ and
  $v_2$ are both minimal wrt $\grsort$; so by our extension of $\geqpred$ we either have
  $v_1 \grpred v_2$ or $v_1 \eqpred v_2$.  In the former case,
  $v_1 \rpomain v_2$ by (Rpo-copy), so $v_1 \grmain v_2$ by (Gr-rpo), so $v_1 \geqmain v_2$ follows.
  In the latter case, $v_1 \ismain v_2$ by (Eq-args), so also $v_1 \geqmain v_2$.
\item Similarly, if $s \grth t$ by $\succ$Theory, then $v_1 := (s\gamma)\downarrowcalc$ and
  $(t\gamma)\downarrowcalc =: v_2$ are both values, with $v_1 \grsort v_2$ and therefore
  $v_1 \grpred v_2$; we conclude $v_1 \grmain v_2$ using Rpo-copy.
\item If $s \geqth t$ by $\succeq$Eq, then $s\downarrowcalc = t\downarrowcalc$ and therefore
  $(s\gamma)\downarrowcalc = (t\gamma)\downarrowcalc$ (as calculation is confluent); by
  reflexivity of $\ismain$ we have $(s\gamma)\downarrowcalc \ismain (t\gamma)\downarrowcalc$ and
  therefore $(s\gamma)\downarrowcalc \geqmain (t\gamma)\downarrowcalc$.
\item If $s \geqth t$ by $\succeq$Args, then because $s$ is not a theory term,
  $(s\gamma)\downarrowcalc = \afun\ (s_1\gamma\downarrowcalc) \cdots (s_n\gamma\downarrowcalc)$;
  and we either have $(t\gamma)\downarrowcalc = \bfun\ (t_1\gamma\downarrowcalc) \cdots
  (t_n\gamma\downarrowcalc)$ or $(t\gamma\downarrowcalc)$ is a value, the latter case only occurring
  if $s$ and $t$ have base type.

  In the first case, we have that each $(s_i\gamma)\downarrowcalc \geqmain
  (t_i\gamma)\downarrowcalc$ by the induction hypothesis, and we easily conclude
  $(s\gamma)\downarrowcalc \geqmain (t\gamma)\downarrowcalc$ either by rule (Eq-args) or by rule
  (Gr-args), depending on whether some $(s_i\gamma)\downarrowcalc \grmain (t_i\gamma)\downarrowcalc$
  or not.

  In the second case, since $s$ is not a theory term, $\afun$ is not a value, and therefore
  $\afun \grpred v := (t\gamma)\downarrowcalc$.  Since $s$ has base type is not a theory term, we
  can apply (Gr-rpo) and (Rpo-copy) to obtain $(s\gamma)\downarrowcalc \grmain v$.
\item If $s \grth t$ by $\succ$Args, we complete in a very similar way.
\item If $s \geqth t$ by $\succeq$Mono, then $s = x\ s_1 \cdots s_n$ and $t = x\ t_1 \cdots t_n$;
  writing $\gamma(x)\downarrowcalc = a\ u_1 \cdots u_k$ where $a$ may be a variable or a function
  symbol and $k \geq 0$, we thus have $(s\gamma)\downarrowcalc = a\ u_1 \cdots u_k\ (s_1\gamma
  \downarrowcalc) \cdots (s_n\gamma\downarrowcalc)$, since $s$ is not a theory term.  Moreover, we
  either have $(t\gamma)\downarrowcalc = a\ u_1 \cdots u_k\ (t_1\gamma\downarrowcalc) \cdots
  (t_n\gamma\downarrowcalc)$ with each $(s_i\gamma)\downarrowcalc \geqmain (t_i\gamma)
  \downarrowcalc$ by the induction hypothesis, or $(t\gamma)\downarrowcalc$ is a value, in which
  case we observe that $a$ cannot be a value and $s$ must have base type, so we may apply
  (Gr-rpo) with (Rpo-copy) and complete as above.
  In the first case, if each $(s_i\gamma)\downarrowcalc \ismain (t_i\gamma)\downarrowcalc$ we have
  $(s\gamma)\downarrowcalc \ismain (t\gamma)\downarrowcalc$ by (Eq-mono) or (Eq-args) (depending
  on whether $a$ is a variable or function symbols); the same holds if $a$ is a function symbol and
  $(s_i\gamma)\downarrowcalc \ismain (t_i\gamma)\downarrowcalc$ holds only for those $i \in
  \filter(a)$.  Otherwise, if some unfiltered $(s_i\gamma)\downarrowcalc \grmain
  (t_i\gamma)\downarrowcalc$ we instead have
  $(s\gamma)\downarrowcalc \grmain (t\gamma)\downarrowcalc$ by (Gr-mono) or (Gr-args) respectively.
\item If $s \geqth t$ because $s \grth t$, we complete immediately by the induction hypothesis.
\item If $s \grth t$ because $s \rpoth t$, we complete immediately by the induction hypothesis and
  rule (Gr-rpo).
\item If $s \rpoth t$ by $\rpo$Th, then $t\gamma$ is a ground theory term, and therefore
  $(t\gamma)\downarrowcalc$ is a value $v$; since $s$ is not a theory term, $\afun$ is not a value
  and therefore $\afun \grpred v$, so $(s\gamma)\downarrowcalc = (\afun \cdots) \rpomain v$ by
  (Rpo-copy).
\item Finally, in each of the cases $\rpo$Select, $\rpo$Appl, $\rpo$Copy and $\rpo$Lex, we either
  immediately conclude with the induction hypothesis (and a case analysis for a smaller ``minimal
  index'' in the case of $\rpo$Lex), or observe that $(t\gamma)\downarrowcalc$ is a value and
  therefore $(s\gamma)\downarrowcalc \rpoth (t\gamma)\downarrowcalc$ by (Rpo-copy), as we also did
  in the case for $\succeq$Args.
  \qedhere
\end{itemize}
\end{proof}

Hence, we have demonstrated that $(\succeq,\succ)$ is a constrained reduction pair.

\bibliography{references}
\end{document}